  \newenvironment{myalgorithm}[2][htbp]
  {%
    \setlength{\algomargin}{.2cm}
    \begin{center}
    \begin{minipage}{#2}
    \begin{algorithm2e}[#1]
    \small
     \let\Par=\par
       \def\par{\endgraf\vspace{.1cm}}
           \SetKw{To}{to}%
       \SetKw{Downto}{downto}%
           \SetKw{Or}{or}%
       \SetKwFor{Algo}{Algorithm}{}{}%
      \vspace{.15cm}%
   }
   {%
     \let\par=\Par\end{algorithm2e}%
     \end{minipage}%
     \end{center}%
   }
  \newtheorem{theorem}{Theorem}
  \newtheorem{lemma}[theorem]{Lemma}
  \newtheorem{example}{Example}
  \newproof{proof}{Proof}
  \newcommand{\TT}{\mathcal{T}}
  \newcommand{\lccp}{\mathit{lccp}}
  \newcommand{\lcp}{\mathit{lcp}}
  \newcommand{\LCCP}{\mathit{LCCP}}
  \newcommand{\solid}{\mathit{solid}}
  \newcommand{\type}{\mathit{type}}
  \newcommand{\NextChange}{\mathit{NextChange}}
  \newcommand{\Next}{\mathit{next}}
  \newcommand{\SUF}{\mathit{SUF}}
  \newcommand{\LCP}{\mathit{LCP}}
  \newcommand{\RANK}{\mathit{RANK}}
  \def\rdots{\mathinner{\ldotp\ldotp}}
\begin{document}

\begin{frontmatter}

  \title{
      A Note on the Longest Common Compatible Prefix Problem for Partial Words
  }

  \author[kcl,upe]{M.~Crochemore}
  \ead{maxime.crochemore@kcl.ac.uk}

  \author[kcl,cut]{C.~S.~Iliopoulos}
  \ead{c.iliopoulos@kcl.ac.uk}

  \author[uw]{T.~Kociumaka}
  \ead{kociumaka@mimuw.edu.pl}

  \author[uw]{M.~Kubica}
  \ead{kubica@mimuw.edu.pl}

  \author[kcl]{A.~Langiu}
  \ead{alessio.langiu@kcl.ac.uk}

  \author[uw]{J.~Radoszewski\corref{cor1}}
  \ead{jrad@mimuw.edu.pl}

  \author[uw,umk]{W.~Rytter
  }
  \ead{rytter@mimuw.edu.pl}

  \author[uw]{B.~Szreder}
  \ead{szreder@mimuw.edu.pl}

  \author[bio,uw]{T.~Wale\'n}
  \ead{walen@mimuw.edu.pl}

  \cortext[cor1]{Corresponding author. Tel.: +48-22-55-44-484,
    fax: +48-22-55-44-400.
  }


  \address[kcl]{King's College London, London WC2R 2LS, UK}

  \address[upe]{Universit\'e Paris-Est, France}

  \address[cut]{Digital Ecosystems \& Business Intelligence Institute,
    Curtin University of Technology, Perth WA 6845, Australia}

  \address[uw]{Faculty~of Mathematics, Informatics and Mechanics,
    University of Warsaw, ul.~Banacha~2, 02-097 Warsaw, Poland}

  \address[umk]{
    Faculty of Mathematics and Computer Science, Copernicus University,\\
    ul. Chopina 12/18, 87-100~Toru\'n, Poland
  }

  \address[bio]{Laboratory of Bioinformatics and Protein Engineering,
     International Institute of Molecular and Cell Biology in Warsaw, Poland}

  \begin{abstract}
    For a partial word $w$ the longest common compatible prefix of two positions $i,j$, denoted $\lccp(i,j)$,
    is the largest $k$ such that $w[i,i+k-1]\uparrow w[j,j+k-1]$, where $\uparrow$ is the compatibility relation of partial words
    (it is not an equivalence relation).
    The LCCP problem is to preprocess a partial word in such a way that any query $\lccp(i,j)$ about this word
    can be answered in $O(1)$ time.
    It is a natural generalization of the longest common prefix (LCP) problem for
    regular words, for which an $O(n)$ preprocessing time and $O(1)$ query time solution exists.

    Recently an efficient algorithm for this problem has been
    given by F.~Blanchet-Sadri and J.~Lazarow (LATA 2013).
    The preprocessing time was $O(nh+n)$, where $h$ is the number of ``holes'' in $w$.
    The algorithm was designed for partial words over a constant alphabet
    and was quite involved.

    We present a simple solution to this problem with slightly better runtime
    that works for any linearly-sortable alphabet.
    Our preprocessing is in time $O(n\mu+n)$, where $\mu$ is the number
    of blocks of holes in $w$.
    Our algorithm uses ideas from alignment algorithms and dynamic programming.
  \end{abstract}

  \begin{keyword}
    partial word, longest common compatible prefix, longest common prefix, dynamic programming
  \end{keyword}

\end{frontmatter}

  Let $w$ be a partial word of length $n$.
  That is, $w=w_1 \ldots w_n$, with $w_i \in \Sigma \cup \{\diamond\}$, where
  $\Sigma$ is called the alphabet (the set of letters) and $\diamond \notin \Sigma$ denotes a hole.
  A non-hole position in $w$ is called solid.
  By $h$ we denote the number of holes in $w$ and by $\mu$ we denote
  the number of blocks of consecutive holes in $w$.

  By $\uparrow$ we denote the compatibility relation: $a \uparrow \diamond$ for any $a \in \Sigma$
  and moreover $\uparrow$ is reflexive.
  The relation $\uparrow$ is extended in a natural letter-by-letter manner to partial words of the same length.
  Note that $\uparrow$ is not transitive: $a \uparrow \diamond$ and $\diamond \uparrow b$
  whereas $a \not\uparrow b$ for any letters $a \ne b$.

  Motivation on partial words and their applications can be found in the book~\cite{PartialWords}.

  \begin{example}\label{ex:1}
    Let $w = a\; b \diamond \diamond\; a \diamond \diamond \diamond b\; c\; a\; b\; \diamond$.
    There are 7 solid positions in $w$, $h=6$ and $\mu=3$.
  \end{example}

  By $w[i,j]$ we denote the subword $w_i \ldots w_j$.
  The longest common compatible prefix of two positions $i,j$, denoted $\lccp(i,j)$,
  is the largest $k$ such that $w[i,i+k-1]\uparrow w[j,j+k-1]$.

  \begin{example}
    For the word $w$ from Example~\ref{ex:1}, we have
    $\lccp(2,9)=3$,
    $\lccp(1,2)=0$,
    $\lccp(3,6)=8$.
  \end{example}

  In \cite{DBLP:conf/lata/Blanchet-SadriL13} F.~Blanchet-Sadri and J.~Lazarow
  provide a data structure that is constructed in $O(nh+n)$ time and space
  and allows computing LCCP for any two positions in $O(1)$ time.
  Their data structure is based on suffix dags which are a modification of suffix trees
  and requires $\Sigma$ to be a fixed alphabet (i.e.\ $|\Sigma|=O(1)$).

  We show a much simpler data structure that requires only $O(n\mu+n)$ construction
  time and space and also allows constant-time LCCP-queries.
  Our algorithm is based on alignment techniques and suffix arrays for full (regular) words
  and works for any integer alphabet (that is, the letters can be treated as integers
  in a range of size $n^{O(1)}$).

  By $\type(i)$ we mean \emph{hole} or \emph{solid} depending on the type of $w_i$.
  We add a sentinel position: $w_0 = \diamond$ if $w_1$ is solid or $w_0=a \in \Sigma$ if $w_1$ is a hole.
  A position in $w$ is called \emph{transit} if it is a hole directly preceded by a solid position
  or a solid position directly preceded by a hole, see Fig.~\ref{fig:transit}.
  Enumerate all transit positions $\TT = \{i_1,i_2,\ldots,i_{\kappa}\}$.
  Note that $\kappa \le 2 \mu$.

  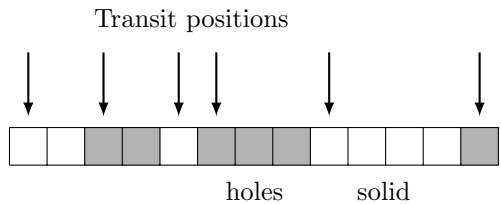
\begin{figure}[htpb]
  \begin{center}
    \begin{tikzpicture}[scale=0.5]
      \filldraw[white!70!black] (2,0) rectangle (4,1);
      \filldraw[white!70!black] (5,0) rectangle (8,1);
      \filldraw[white!70!black] (12,0) rectangle (13,1);
      \draw (0,0) grid (13,1);
      \draw[->,-latex,thick] (0.5,3) -- (0.5,1.3);
      \draw[xshift=2cm,->,-latex,thick] (0.5,3) -- (0.5,1.3);
      \draw[xshift=4cm,->,-latex,thick] (0.5,3) -- (0.5,1.3);
      \draw[xshift=5cm,->,-latex,thick] (0.5,3) -- (0.5,1.3);
      \draw[xshift=8cm,->,-latex,thick] (0.5,3) -- (0.5,1.3);
      \draw[xshift=12cm,->,-latex,thick] (0.5,3) -- (0.5,1.3);
      \draw (2,3.3) node[above right] {Transit positions};
      \draw (5.5,-0.2) node[below right] {holes};
      \draw (9,-0.2) node[below right] {solid};
    \end{tikzpicture}
  \end{center}
    \caption{\label{fig:transit}
      Illustration of transit positions, $\mu=3$, $\kappa=6$.
    }
  \end{figure}

  \begin{example}
    Let $w = a\; b \diamond \diamond\;a \diamond \diamond \diamond b\; c\; a\; b\;\diamond$.
    Then $\TT = \{1,3,5,6,9,13\}$, see also Fig.~\ref{fig:transit}.
  \end{example}

  \noindent
  Our data structure consists of two parts:

  \begin{enumerate}
    \item
    a data structure of size $O(n)$ allowing to answer in $O(1)$ time
    the {\em longest common prefix},
    denoted $\lcp(i,j)$, between any two positions in the full word
    $\hat{w}$, which results from $w$ by treating holes as solid symbols

    \item
    a $n \times \mu$ table
    $$\LCCP[i,j]\;=\; \lccp(i,j)\quad \mbox{for }i \in \{1,\ldots,n\},\ j \in \TT.$$
    For simplicity we assume $\LCCP[j,i] = \LCCP[i,j]$ if $i \in \TT$, $j \in \{1,\ldots,n\}$.
  \end{enumerate}

  The data structure (1) consists of the suffix array for $\hat{w}$ and Range Minimum Query data structure.
  A suffix array is composed of three tables: $\SUF$, $\RANK$ and $\LCP$.
  The $\SUF$ table stores the list of positions in $\hat{w}$ sorted according to
  the increasing lexicographic order of suffixes starting at these positions.
  The $\LCP$ array stores the lengths of the longest common prefixes
  of consecutive suffixes in $\SUF$.
  We have $\LCP[1] = -1$ and, for $1 < i \le n$, we have:
  $$\LCP[i] = \lcp(\SUF[i - 1], \SUF[i]).$$
  Finally, the $\RANK$ table is an inverse of the $\SUF$ table:
  $$\SUF[\RANK[i]] = i \quad\mbox{for }i=1,2,\ldots,n.$$
  All tables comprising the suffix array for a word over a linearly-sortable alphabet
  can be constructed in $O(n)$ time
  \cite{AlgorithmsOnStrings,DBLP:conf/icalp/KarkkainenS03,DBLP:conf/cpm/KasaiLAAP01}.

  The Range Minimum Query data structure (RMQ, in short) is constructed for
  an array $A[1 \rdots n]$ of integers.
  This array is preprocessed to answer the following form of queries: for an
  interval $[i, j]$ (where $1 \le i \le j \le n$), find the minimum value
  $A[k]$ for $i \le  k \le j$.
  The best known RMQ data structures have $O(n)$ preprocessing time and
  $O(1)$ query time \cite{DBLP:journals/siamcomp/HarelT84}.

  To compute $\lcp(i,j)$ for $i \ne j$ we use a classic combination of the two data structures, see also \cite{AlgorithmsOnStrings}.
  Let $x$ be $\min(\RANK[i],\RANK[j])$ and $y$ be $\max(\RANK[i],\RANK[j])$.
  Then:
  $$\lcp(i,j) = \min\{\LCP[x+1], \LCP[x+2],\ldots,\LCP[y]\}.$$
  This value can be computed in $O(1)$ time provided that RMQ data structure for the table $\LCP$ is given.

  \bigskip
  For $i \in \{1,\ldots,n\}$ define
  $$\NextChange[i] = \min\{k > 0\,:\, \type(i+k) \ne \type(i)\}.$$
  If no such $k$ exists then $\NextChange[i] = n+1-i$.
  Clearly the $\NextChange$ table can be computed in $O(n)$ time.
  We denote
  $$\Next(i,j) = \min(\NextChange[i],\NextChange[j]).$$

  \begin{lemma}\label{lem:lccp-query}
    Assume we have the data structures from points (1)-(2) above.
    Then $\lccp(i,j)$ for any $1 \le i,j \le n$ can computed in $O(1)$ time.
  \end{lemma}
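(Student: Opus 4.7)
The plan is to do a case analysis driven by whether the queried positions lie in $\TT$. If $i\in\TT$ or $j\in\TT$, the answer $\lccp(i,j)$ is stored directly in the table and is returned in $O(1)$ (using the convention $\LCCP[j,i]=\LCCP[i,j]$).

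Otherwise I would set $t=\Next(i,j)=\min(\NextChange[i],\NextChange[j])$. The central observation is that for every $0\le k<t$ the type of $w[i+k]$ equals $\type(i)$ and the type of $w[j+k]$ equals $\type(j)$, so the compatibility of the first $t$ pairs is entirely determined by these two types. If both $w[i]$ and $w[j]$ are solid, then on these positions $\uparrow$ coincides with equality, and $\hat{w}$ agrees with $w$ there, so I compute $\ell=\lcp(i,j)$ via data structure~(1); when $\ell<t$ the first mismatch is reached before any hole can appear and I return $\ell$; when $\ell\ge t$ the first $t$ characters all match and I pass to the continuation step. If instead at least one of $w[i],w[j]$ is a hole, then for every $k<t$ at least one of $w[i+k],w[j+k]$ is a hole, so compatibility holds trivially and I again pass to the continuation step.

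The continuation step exploits the fact that $t$ equals $\NextChange[i]$ or $\NextChange[j]$: whichever achieves the minimum produces a position ($i+t$ or $j+t$) whose type differs from its immediate predecessor, hence lies in $\TT$. Therefore $\lccp(i+t,j+t)$ can be read off as $\LCCP[j+t,i+t]$ or $\LCCP[i+t,j+t]$, and the returned value is $t$ plus this lookup. The degenerate case $\NextChange[i]=n+1-i$ simply means $i+t=n+1$, so the lookup contributes $0$ and we return $t$; the symmetric case for $j$ is analogous.

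The main subtlety is the correctness of the ``both solid'' branch, namely that $\lcp(i,j)$ in $\hat{w}$ faithfully reports $\lccp(i,j)$ in $w$ for indices below $t$. This holds because the first $t$ positions from $i$ and from $j$ are all solid in $w$, so $\hat{w}$ and $w$ coincide there and no spurious match or mismatch is introduced by replacing holes. Each branch of the case analysis performs at most one $\lcp$-query and one $\LCCP$-lookup, yielding the claimed $O(1)$ query time.
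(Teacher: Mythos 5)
Your proposal is correct and follows essentially the same route as the paper's own (very terse) proof: a table lookup when $i$ or $j$ is transit, the $\Next(i,j)$ offset plus a table lookup when a hole is involved or when $\lcp(i,j)\ge\Next(i,j)$, and the raw $\lcp$ value otherwise. You merely spell out the justifications the paper leaves implicit (why $i+t$ or $j+t$ is a transit position, why $\lcp$ on $\hat{w}$ is faithful on all-solid stretches, and the $n+1$ sentinel), all of which are accurate.
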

  \begin{proof}
    If any of the positions $i,j$ belongs to $\TT$ then we simply use the $\LCCP$ table.
    Otherwise we have two cases.

    If any of the positions $i,j$ is a hole then the result is $d+\LCCP[i+d,j+d]$, where $d=\Next(i,j)$.

    Otherwise, both $i,j$ are solid.
    Let $k=\lcp(i,j)$.
    The result is $d+\LCCP[i+d,j+d]$ if $k \ge d$ or $k$ otherwise.
    \qed
  \end{proof}
  
  \begin{myalgorithm}[H]{9.2 cm}
    \Algo{\textsc{LCCP-Query}($w,i,j$)}
    {
      $d:=\Next(i,j)$; $k:=\lcp(i,j)$\;
      \If{$\type(w_i)\not=\solid$ {\bf or} $\type(w_j)\not=\solid$ {\bf or} $k\ge d$}{
          \Return{$d+\LCCP[i+d,j+d]$}\;
      }\lElse{
          \Return{$k$}
      }
    }
  \end{myalgorithm}

  \begin{theorem}
    Let $w$ be a partial word of length $n$ over an integer alphabet.
    We can preprocess $w$ in $O(n\mu+n)$ time to enable $\lccp$-queries
    in constant time.
  \end{theorem}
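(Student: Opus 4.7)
The plan is to reuse Lemma~\ref{lem:lccp-query}: once the suffix array with $\LCP$ array of $\hat{w}$ (augmented by an $O(n)$ RMQ structure), the $\NextChange$ table, and the $\LCCP$ table are in hand, queries cost $O(1)$. The suffix-array components are standard for a linearly-sortable alphabet and cost $O(n)$ in total; $\NextChange$ and the set $\TT$ come from a single left-to-right scan of $w$ in $O(n)$. So the only nontrivial task is filling the $n\times\kappa$ table $\LCCP[\cdot,\cdot]$ in $O(n\mu)$ time, using $\kappa\le 2\mu$.

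For the filling itself I would use the very recurrence that drives Lemma~\ref{lem:lccp-query}: for each entry $\LCCP[i,j]$ with $j\in\TT$, letting $d=\Next(i,j)$ and $k=\lcp(i,j)$, set
\[
  \LCCP[i,j] \;=\;
  \begin{cases}
    k & \text{if } \type(i)=\type(j)=\solid \text{ and } k<d, \\
    d+\LCCP[i+d,\,j+d] & \text{otherwise,}
  \end{cases}
\]
with the boundary convention $\LCCP[a,b]=0$ whenever $a>n$ or $b>n$. Each entry is then produced in $O(1)$ time using the precomputed $\Next$ and the suffix-array based $\lcp$, so provided every recursive lookup $\LCCP[i+d,j+d]$ points to an already-filled cell of the table, the total cost is $O(n\kappa)=O(n\mu)$.

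The main obstacle, and where I would spend most of the care, is arranging the computation so that this recurrence is well-defined: the table stores only columns indexed by $\TT$, so I must check both that the referenced cell is actually present and that it is already filled. Presence follows straight from the definition of $d$: if $d=\NextChange[i]$ then $\type(i+d)\ne\type(i)$, so $i+d\in\TT$, and symmetrically $d=\NextChange[j]$ forces $j+d\in\TT$; hence at least one of $i+d,j+d$ lies in $\TT$, and by the stored symmetry $\LCCP[a,b]=\LCCP[b,a]$ for $b\in\TT$ the required value is one of the entries of our table. For the order I would process the $n\kappa$ pairs $(i,j)$ in decreasing order of the diagonal sum $i+j$; since $d\ge 1$, the referenced cell satisfies $(i+d)+(j+d)=i+j+2d>i+j$ and was therefore processed earlier. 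Combined with the $O(1)$ query of Lemma~\ref{lem:lccp-query}, this yields the claimed $O(n\mu+n)$ preprocessing.
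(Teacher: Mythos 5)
Your proposal is correct and follows essentially the same route as the paper: the same suffix-array/RMQ structure for $\lcp$, and the same dynamic programme that fills the $n\times\kappa$ table by evaluating the recurrence of Lemma~\ref{lem:lccp-query} at each cell with a zero boundary beyond position $n$. The only difference is the processing order (decreasing $i+j$ rather than the paper's decreasing lexicographic order of $(i,j)$); both make the referenced cell $\LCCP[i+d,j+d]$ available before it is needed, and your explicit verification that at least one of $i+d$, $j+d$ is a transit position is a detail the paper leaves implicit.
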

  \begin{proof}
    The data structure (1) for $\lcp$-queries is constructed in $O(n)$ time
    from the suffix array for $\hat{w}$ and the RMQ data structure for the $\LCP$ table.
    The construction of the data structure (2) is shown in the following \mbox{\textsc{LCCP-Preprocess}} algorithm.
    This algorithm is based on the dynamic programming technique and works in $O(n\mu+n)$ time.
    Using the two data structures, by Lemma~\ref{lem:lccp-query} we can answer $\lccp(i,j)$ queries
    in $O(1)$ time.
    \qed
  \end{proof}

  \newcommand{\algname}{\textsc{LCCP-Preprocess}}
  \begin{myalgorithm}[H]{10.7 cm}
    \Algo{\algname(w)}
    {
      \For{$i:=1$ \mbox{\bf to} $n+1$}{
        $\LCCP[i,n+1] := 0$\;
      }
      \ForEach{$(i,j)$ : $i \in \{1,\ldots,n\},\ j \in \TT$ in decreasing lex.\ order} {
        $\LCCP[i,j]:=$\,\textsc{LCCP-Query}($w,i,j$);
      }
    }
  \end{myalgorithm}

  \begin{example}
    Let $w = a\; b \diamond \diamond\; a \diamond \diamond \diamond b\; c\; a\; b\;\diamond$.
    The $\LCCP$ table computed by the algorithm \textsc{LCCP-Preprocess}($w$) is as follows.

    \vspace{0.4cm}
    \renewcommand\arraystretch{1.2}
    \begin{tabular}{c|ccccccccccccc}
      \diagbox{$j$}{$w_i$} & $a$ & $b$ & $\diamond$ & $\diamond$ & $a$ & $\diamond$ & $\diamond$ &   $\diamond$ & $b$ & $c$ & $a$ & $b$ & $\diamond$\\
      \hline \hline
      1 & 13 & 0 & 8 & 1 & 4 & 4 & 7 & 4 & 0 & 0 & 3 & 0 & 1\\
      3 & 8 & 7 & 11 & 6 & 6 & 8 & 2 & 2 & 5 & 2 & 3 & 2 & 1\\
      5 & 4 & 0 & 6 & 5 & 9 & 4 & 4 & 6 & 0 & 0 & 3 & 0 & 1\\
      6 & 4 & 3 & 8 & 5 & 4 & 8 & 3 & 3 & 5 & 4 & 3 & 2 & 1\\
      9 & 0 & 3 & 5 & 1 & 0 & 5 & 2 & 1 & 5 & 0 & 0 & 2 & 1\\
      13 & 1 & 1 & 1 & 1 & 1 & 1 & 1 & 1 & 1 & 1 & 1 & 1 & 1\\
    \end{tabular}
    \renewcommand\arraystretch{1.0}
  \end{example}

  \bibliographystyle{abbrv}
  \bibliography{lccp}

\begin{thebibliography}{1}

\bibitem{PartialWords}
F.~Blanchet-Sadri.
\newblock {\em Algorithmic Combinatorics on Partial Words}.
\newblock Chapman \& Hall/CRC Press, Boca Raton, FL, 2008.

\bibitem{DBLP:conf/lata/Blanchet-SadriL13}
F.~Blanchet-Sadri and J.~Lazarow.
\newblock Suffix trees for partial words and the longest common compatible
  prefix problem.
\newblock In A.~H. Dediu, C.~Mart\'{\i}n-Vide, and B.~Truthe, editors, {\em
  LATA}, volume 7810 of {\em Lecture Notes in Computer Science}, pages
  165--176. Springer, 2013.

\bibitem{AlgorithmsOnStrings}
M.~Crochemore, C.~Hancart, and T.~Lecroq.
\newblock {\em Algorithms on Strings}.
\newblock Cambridge University Press, 2007.

\bibitem{DBLP:journals/siamcomp/HarelT84}
D.~Harel and R.~E. Tarjan.
\newblock Fast algorithms for finding nearest common ancestors.
\newblock {\em SIAM J. Comput.}, 13(2):338--355, 1984.

\bibitem{DBLP:conf/icalp/KarkkainenS03}
J.~K{\"a}rkk{\"a}inen and P.~Sanders.
\newblock Simple linear work suffix array construction.
\newblock In J.~C.~M. Baeten, J.~K. Lenstra, J.~Parrow, and G.~J. Woeginger,
  editors, {\em ICALP}, volume 2719 of {\em Lecture Notes in Computer Science},
  pages 943--955. Springer, 2003.

\bibitem{DBLP:conf/cpm/KasaiLAAP01}
T.~Kasai, G.~Lee, H.~Arimura, S.~Arikawa, and K.~Park.
\newblock Linear-time longest-common-prefix computation in suffix arrays and
  its applications.
\newblock In A.~Amir and G.~M. Landau, editors, {\em CPM}, volume 2089 of {\em
  Lecture Notes in Computer Science}, pages 181--192. Springer, 2001.

\end{thebibliography}

\end{document}